\pgfplotsset{width=9cm,compat=1.9}
\newtheorem{theorem}{Theorem}[section]
\newtheorem{lemma}[theorem]{Lemma}
\newcommand{\be}{\begin{equation}}
\newcommand{\ee}{\end{equation}}
\newcommand{\bes}{\begin{subequations}}
\newcommand{\ese}{\end{subequations}}
\newcommand{\beg}{\begin{gather}}
\newcommand{\eg}{\end{gather}}
\newcommand{\ben}{\begin{enumerate}}
\newcommand{\een}{\end{enumerate}}
\def\beqn{\begin{eqnarray}}
\def\eeqn{\end{eqnarray}}
\begin{document}

\begin{flushright}
ITEP-TH-07 /19 
\end{flushright}
\begin{center}
{\large{\bf The Dynamical equations for ${\mathfrak{gl}}(n\vert m)$
 }
}\\
\vspace{10mm} {E. Dotsenko$^{1,2,3}$  
 \\ \vspace{7mm}
 \vspace{2mm} $1$ - {\sf Institute for Theoretical and Experimental Physics, B.Cheremushkinskaya 25, 
  Moscow, 117218, Russia}\\
 \vspace{2mm} $2$ - {\sf Moscow Institute of physics and Technology,y, Inststitutskii per. 9, Dolgoprudny, Moscow region, 141700
 Russia}\\
 \vspace{2mm}$3$ - {\sf Skolkovo Institute of Science and Technology, 143026 Moscow, Russia \\
}
 \vspace{4mm}
 {\footnotesize email:  edotsenko95@gmail.com\\}
 \vspace{4mm}}
 \end{center}
 
\begin{abstract}
    In this note we propose a compatible set of equations which commutes with the Knizhnik-Zamolodchikov equations based on the $\mathfrak{{gl}}(n\vert m)$ symmetry algebra and establish the Matsuo-Cherednik correspondence in this context.
\end{abstract}
\section{Introduction}

There is a relation between the solutions of the (quantum) Knizhnik-Zamolodchikov equations ((q)KZ for short) and the spectral problem of quantum systems Calogero-Moser type, proposed in \cite{M} and later extended in \cite{Ch}, \cite{Mi}. The recent discussion \cite{ZZ}, \cite{GZZ} originated in the quantum-classical correspondence \cite{NRS}, \cite{GoZZ}. \par

It was shown in \cite{FMTV} and generalized in \cite{EV}, that for twisted KZ equations, based on simple finite-dimensional Lie algebra $\mathfrak{g}$, there is a compatible set of equations with respect to twist parameters, which commutes with KZ. This set of equations was called Dynamical equations. The joint system could be viewed as a 'vector' analogue of the bispectral problem of \cite{DG}. In fact, they are related through the Matsuo-Cherednik map (MC for brevity): if one has a solution of a joint system of KZ and Dynamical equations, then one can cook-up the solution of the bispectral problem for the corresponding system of Calogero-Moser type. For the proof one can consult with \cite{TL} (section 7), \cite{MSt} and \cite{BMO}. It is worth to mention that in \cite{OS} the KZ and Dynamical equations were defined for a wide class of algebras and it will be very interesting to study the MC map in this generality.\par

  Integrable systems with $\mathbb{Z}/2\mathbb{Z}$ graded symmetry algebras (mostly of type $A$) were studied in many papers, for example, one can consult with \cite{BR}. Also recently superchains were embedded in the context of the Bethe/Gauge correspondence in \cite{N}. And, maybe, it might be interesting to derive the Dynamical equations, proposed in this note using the approach of \cite{OS}. \par

The outline of the work is the following: in section 2 we will prove the compatibility of the proposed Dynamical equations and in section 3 we will establish the MC map. The q-difference generalization of the proposed equations,  as well as issues about the monodromy, will be discussed elsewhere \cite{E}. 

\paragraph{Acknowledgements}\par
The author has benefited from the discussions with B. Feigin, A. Zabrodin, A. Zotov and N. Slavnov. Also, the author is grateful to P. Etingof for the reference \cite{NatG}.  
This work was supported by the RFBR -18-02-01081 grant.
\section{The Dynamical equations}
Let $\overrightarrow{z}\in \mathbb{C}^{k}\backslash\amalg_{i<j} z_{i}= z_{j}$  and $\overrightarrow{\lambda}\in \mathbb{C}^{n+m}\backslash \amalg_{i<j} \lambda_{i}= \lambda_{j}$. Let $V = \bigotimes^{k}_{i=1}\mathbb{C}^{n\vert m}$ be the tensor product of the vector representations of $\mathfrak{gl}(n\vert m)$.
And let $\vert\Psi\rangle : (\overrightarrow{z},\overrightarrow{\lambda}) \to V$ be the flat section, namely
\begin{equation}
    \left(\kappa\partial_{z_i} - \sum_{c}\lambda_c e^{(i)}_{cc} - \sum_{j,\neq i} \frac{P_{ij}}{z_{i} - z_{j}}\right)\vert\Psi\rangle = 0, \label{sKZ}
\end{equation}
where graded permutation $P_{ij}= \sum_{a,b} (-1)^{p(b)}e^{(i)}_{ab}e^{(j)}_{ba}$ and $p(a)$ is a parity function, defined in the Appendix. $e^{(j)}_{ab}$ is the generator of $\mathfrak{gl}(n\vert m)$ which non-trivially acts only in the $i^{th}$ tensor component of $V$ as matrix unit in some basis (see Appendix).
\begin{theorem}
The following system of equations is compatible and commutes with $\eqref{sKZ}$
\begin{equation}
     \left(\kappa\partial_{a} - \sum_{j}z_{j}e^{(j)}_{aa} - \sum_{b,\neq a}(-1)^{p(b)}\frac{E_{ab}E_{ba} - E_{aa}}{\lambda_a - \lambda_b}\right)\vert\Psi\rangle = 0, \label{sDyn}
\end{equation}
where $E_{ab} = \sum^{k}_{j = 1}e^{(j)}_{ab}$.
\end{theorem}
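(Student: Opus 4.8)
The plan is to write each operator as $\kappa\partial-(\text{matrix coefficient})$: set $\nabla^{\mathrm{KZ}}_i=\kappa\partial_{z_i}-A_i$ with $A_i=\sum_c\lambda_c e^{(i)}_{cc}+\sum_{j\neq i}P_{ij}/(z_i-z_j)$, and $\nabla^{\mathrm{Dyn}}_a=\kappa\partial_{\lambda_a}-B_a$ with $B_a=\sum_j z_j e^{(j)}_{aa}+D_a$, where $D_a=\sum_{b\neq a}(-1)^{p(b)}(E_{ab}E_{ba}-E_{aa})/(\lambda_a-\lambda_b)$. Since all coefficients are $\kappa$-independent, every bracket collapses as $[\kappa\partial-X,\kappa\partial'-Y]=\kappa(\partial'X-\partial Y)+[X,Y]$, with no surviving second-order part. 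Treating $\kappa$ as a formal parameter, I would establish the vanishing of the order-$\kappa$ term (a closedness identity for the coefficients) and of the order-$\kappa^0$ term (a commutator of coefficients) separately.

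For $[\nabla^{\mathrm{KZ}}_i,\nabla^{\mathrm{Dyn}}_a]$ the order-$\kappa$ term is $\partial_{\lambda_a}A_i-\partial_{z_i}B_a=e^{(i)}_{aa}-e^{(i)}_{aa}=0$, so it remains to prove $[A_i,B_a]=0$. I would expand this into four brackets. The Cartan--Cartan bracket $[\sum_c\lambda_c e^{(i)}_{cc},\sum_j z_j e^{(j)}_{aa}]$ vanishes because diagonal units commute. The bracket of $P_{ij}/(z_i-z_j)$ with $D_a$ vanishes because $P_{ij}$ intertwines the coproduct and hence commutes with every global generator $E_{ab}=\sum_j e^{(j)}_{ab}$, thus with the whole numerator $E_{ab}E_{ba}-E_{aa}$. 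For the two remaining brackets I would use $\sum_c\lambda_c[e^{(i)}_{cc},E_{ab}]=(\lambda_a-\lambda_b)e^{(i)}_{ab}$, which makes $[\sum_c\lambda_c e^{(i)}_{cc},D_a]$ lose its denominator and become the $\lambda$-free expression $\sum_{b\neq a}(-1)^{p(b)}(e^{(i)}_{ab}E_{ba}-E_{ab}e^{(i)}_{ba})$, while $[\sum_{j\neq i}P_{ij}/(z_i-z_j),\sum_l z_l e^{(l)}_{aa}]$ collapses, via $[P_{ij},e^{(i)}_{aa}+e^{(j)}_{aa}]=0$, to the $z$-free expression $\sum_{j\neq i}[P_{ij},e^{(i)}_{aa}]$. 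The final step is to check that these two expressions cancel term by term after substituting $P_{ij}=\sum_{c,d}(-1)^{p(d)}e^{(i)}_{cd}e^{(j)}_{dc}$ and reducing with the graded $\mathfrak{gl}(n|m)$ relations.

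For $[\nabla^{\mathrm{Dyn}}_a,\nabla^{\mathrm{Dyn}}_b]$ the order-$\kappa$ term is $\partial_{\lambda_b}D_a-\partial_{\lambda_a}D_b$, a pure double pole at $\lambda_a=\lambda_b$ whose residue is $(-1)^{p(b)}(E_{ab}E_{ba}-E_{aa})-(-1)^{p(a)}(E_{ba}E_{ab}-E_{bb})$; this vanishes by the graded relation $[E_{ab},E_{ba}\}=E_{aa}-(-1)^{p(a)+p(b)}E_{bb}$. For the order-$\kappa^0$ term $[B_a,B_b]$ I would organize by degree in $z$: the Cartan--Cartan part is zero, the degree-one part $[\sum_j z_j e^{(j)}_{aa},D_b]+[D_a,\sum_j z_j e^{(j)}_{bb}]$ must vanish on its own, and the degree-zero part is $[D_a,D_b]$. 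The $z$-linear part is handled by a graded structure-constant identity analogous to the one used above, and $[D_a,D_b]=0$ is the commutativity of the rational, Gaudin-type Hamiltonians $D_a$ in the spectral variables $\lambda$: expanding produces simple poles weighted by $1/((\lambda_a-\lambda_c)(\lambda_b-\lambda_c))$ and $1/((\lambda_a-\lambda_b)(\lambda_a-\lambda_c))$, which I would collapse using partial fractions together with a graded Jacobi identity for the triple products of the $E$'s.

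The cancellation of denominators is routine; the genuine difficulty lies in the consistent bookkeeping of the parities. Two places demand particular care: the Koszul signs incurred when transposing matrix units in different tensor factors, so that $e^{(i)}_{ab}$ and $e^{(l)}_{cd}$ super-commute rather than commute, which controls the term-by-term cancellation of $\sum_{b\neq a}(-1)^{p(b)}(e^{(i)}_{ab}E_{ba}-E_{ab}e^{(i)}_{ba})$ against $\sum_{j\neq i}[P_{ij},e^{(i)}_{aa}]$; and the signs $(-1)^{p(\cdot)}$ inside the graded Jacobi identity that governs $[D_a,D_b]$. I expect this last step, the super-analogue of the classical (dynamical) Yang--Baxter / Gaudin commutativity $[D_a,D_b]=0$, to be the main obstacle, since it is there that the rational-function identities and the full weight of the grading must be combined at once.
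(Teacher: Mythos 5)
Your proposal follows essentially the same route as the paper: expand order by order in $\kappa$, cancel the double pole in $[\nabla^{\mathrm{Dyn}}_a,\nabla^{\mathrm{Dyn}}_b]$ via the graded relation $E_{ab}E_{ba}-(-1)^{p(a)+p(b)}E_{ba}E_{ab}=E_{aa}-(-1)^{p(a)+p(b)}E_{bb}$, use the commutation of $P_{ij}$ with the global generators to kill $[P_{ij},E_{ab}E_{ba}-E_{aa}]$, cancel the two mixed Cartan brackets against each other, and reduce $[D_a,D_b]=0$ to partial fractions plus graded identities. The step you single out as the main obstacle is exactly where the paper spends its effort: a four-case analysis over the parities of $a,b,c$ in which each of the three surviving commutators contributes the same combination $E_{ac}E_{ba}E_{cb}-E_{bc}E_{ab}E_{ca}$ with rational coefficients that cancel by the identity $\frac{1}{\lambda_{ab}\lambda_{bc}}-\frac{1}{\lambda_{ac}\lambda_{bc}}+\frac{1}{\lambda_{ac}\lambda_{ba}}=0$.
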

\begin{proof}
Let us prove first, that $\eqref{sDyn}$ is a compatible system. For showing this one has to check that the following commutator vanishes 
\begin{equation}
\begin{gathered}
     \Bigg[\kappa\partial_{a} - \sum_{j}z_{j}e^{(j)}_{aa} - \sum_{c,\neq a}(-1)^{p(c)}\frac{E_{ac}E_{ca} - E_{aa}}{\lambda_a - \lambda_c},  \kappa\partial_{b} - \sum_{l}z_{l}e^{(l)}_{bb} - \sum_{d,\neq b}(-1)^{p(d)}\frac{E_{bd}E_{db} - E_{bb}}{\lambda_b - \lambda_d} \Bigg] = 0.
     \end{gathered}
\end{equation}
There are three types of terms 
\begin{enumerate}
 \item[(1)] 
 \begin{equation}
 \begin{gathered}
     \Bigg[\partial_{a}, \sum_{d,\neq b}(-1)^{p(d)}\frac{E_{bd}E_{db} - E_{bb}}{\lambda_b - \lambda_d} \Bigg] +  \Bigg[\sum_{c,\neq a}(-1)^{p(c)}\frac{E_{ac}E_{ca} - E_{aa}}{\lambda_a - \lambda_c},\partial_{b}\Bigg] = \\ = (-1)^{p(a)}\frac{E_{ba}E_{ab} - E_{bb}}{(\lambda_b - \lambda_a)^2} - (-1)^{p(b)}\frac{E_{ab}E_{ba} - E_{aa}}{(\lambda_a - \lambda_b)^2} = 0.
     \end{gathered}
 \end{equation}
 The last equation holds due to relations $\eqref{definingglnm}$.
\item[(2)] 
\begin{equation}
    \begin{gathered}
    \Bigg[ \sum_{j}z_{j}e^{(j)}_{aa}, \sum_{d,\neq b}(-1)^{p(d)}\frac{E_{bd}E_{db} - E_{bb}}{\lambda_b - \lambda_d} \Bigg] +\Bigg[ \sum_{c,\neq a}(-1)^{p(c)}\frac{E_{ac}E_{ca} - E_{aa}}{\lambda_a - \lambda_c},  \sum_{l}z_{l}e^{(l)}_{bb}\Bigg] = \\ = \frac{(-1)^{p(a)}}{\lambda_{ba}}\left(\sum_{j}z_{j}(E_{ba}e^{(j)}_{ab}- e^{(j)}_{ba}E_{ab})\right) + \frac{(-1)^{p(b)}}{\lambda_{ab}}\left(\sum_{j}z_{j}(e^{(j)}_{ab}E_{ba}- E_{ab}e^{(j)}_{ba})\right) = 0
    \end{gathered}
\end{equation}
The last equation is true due to $\eqref{superotimes}$
\item[(3)] 
\be
 \begin{gathered}
 \Bigg[\sum_{c,\neq a}(-1)^{p(c)}\frac{E_{ac}E_{ca} - E_{aa}}{\lambda_a - \lambda_c}, \sum_{d,\neq b}(-1)^{p(d)}\frac{E_{bd}E_{db} - E_{bb}}{\lambda_b - \lambda_d} \Bigg] = \\
 = \sum_{c,\neq a,b}\Bigg(\Bigg[ (-1)^{p(b)}\frac{E_{ab}E_{ba}}{\lambda_{ab}}, (-1)^{p(c)}\frac{E_{bc}E_{cb}}{\lambda_{bc}} \Bigg] + \Bigg[\frac{E_{ac}E_{ca}}{\lambda_{ac}}, \frac{E_{bc}E_{cb}}{\lambda_{bc}} \Bigg] + \\ + \Bigg[ (-1)^{p(c)}\frac{E_{ac}E_{ca}}{\lambda_{ac}}, (-1)^{p(a)}\frac{E_{ba}E_{ab}}{\lambda_{ba}} \Bigg]\Bigg)  +  (-1)^{p(a) + p(b)}\Bigg[\frac{E_{ab}E_{ba}}{\lambda_{ab}},\frac{E_{ba}E_{ab}}{\lambda_{ba}}\Bigg] \label{manycomm}
 \end{gathered}
\ee
A short calculation shows that the last commutator vanishes. Let us focus on the middle row in the above expression, which is more complicated. To prove that it is zero one has to consider the following four cases \par

     \paragraph{ p(a) = p(b) = 0, p(c)= 1} \par
    \be
\begin{gathered}
-\Bigg[\frac{E_{ab}E_{ba}}{\lambda_{ab}},\frac{E_{bc}E_{cb}}{\lambda_{bc}} \Bigg] + \Bigg[\frac{E_{ac}E_{ca}}{\lambda_{ac}}, \frac{E_{bc}E_{cb}}{\lambda_{bc}} \Bigg] - \Bigg[\frac{E_{ac}E_{ca}}{\lambda_{ac}},\frac{E_{ba}E_{ab}}{\lambda_{ba}} \Bigg] = \\ =  -\frac{E_{ac}E_{ba}E_{cb} - E_{bc}E_{ab}E_{ca}}{\lambda_{ab}\lambda_{bc}} + \frac{E_{ac}E_{ba}E_{cb} - E_{bc}E_{ab}E_{ca}}{\lambda_{ac}\lambda_{bc}} - \frac{E_{ac}E_{ba}E_{cb} - E_{bc}E_{ab}E_{ca}}{\lambda_{ac}\lambda_{ba}} = 0.
\end{gathered}
     \ee
     \par
     \paragraph{ p(a) = p(c) = 0, p(b)= 1} 
     \be
     \begin{gathered}
     -\Bigg[\frac{E_{ab}E_{ba}}{\lambda_{ab}},\frac{E_{bc}E_{cb}}{\lambda_{bc}} \Bigg] + \Bigg[\frac{E_{ac}E_{ca}}{\lambda_{ac}}, \frac{E_{bc}E_{cb}}{\lambda_{bc}} \Bigg] + \Bigg[\frac{E_{ac}E_{ca}}{\lambda_{ac}},\frac{E_{ba}E_{ab}}{\lambda_{ba}} \Bigg] = \\ = \frac{E_{ac}E_{ba}E_{cb} - E_{bc}E_{ab}E_{ca}}{\lambda_{ab}\lambda_{bc}} - \frac{E_{ac}E_{ba}E_{cb} - E_{bc}E_{ab}E_{ca}}{\lambda_{ac}\lambda_{bc}} + \frac{E_{ac}E_{ba}E_{cb} - E_{bc}E_{ab}E_{ca}}{\lambda_{ac}\lambda_{ba}} = 0.
     \end{gathered}
     \ee
     \paragraph{ p(b) = p(c) = 1, p(a)= 0}
     \be
     \begin{gathered}
     \Bigg[\frac{E_{ab}E_{ba}}{\lambda_{ab}},\frac{E_{bc}E_{cb}}{\lambda_{bc}} \Bigg] + \Bigg[\frac{E_{ac}E_{ca}}{\lambda_{ac}}, \frac{E_{bc}E_{cb}}{\lambda_{bc}} \Bigg] - \Bigg[\frac{E_{ac}E_{ca}}{\lambda_{ac}},\frac{E_{ba}E_{ab}}{\lambda_{ba}} \Bigg] = \\ = \frac{E_{ac}E_{ba}E_{cb} - E_{bc}E_{ab}E_{ba}}{\lambda_{ab}\lambda_{bc}} - \frac{E_{ac}E_{ba}E_{cb} - E_{bc}E_{ab}E_{ca}}{\lambda_{ac}\lambda_{bc}} + \frac{E_{ac}E_{ba}E_{cb} - E_{bc}E_{ab}E_{ca}}{\lambda_{ac}\lambda_{ba}} = 0.
     \end{gathered}
     \ee
     \paragraph{ p(a) = p(b) = 1, p(c)= 0}
     \be
     \begin{gathered}
     -\Bigg[\frac{E_{ab}E_{ba}}{\lambda_{ab}},\frac{E_{bc}E_{cb}}{\lambda_{bc}} \Bigg] + \Bigg[\frac{E_{ac}E_{ca}}{\lambda_{ac}}, \frac{E_{bc}E_{cb}}{\lambda_{bc}} \Bigg] - \Bigg[\frac{E_{ac}E_{ca}}{\lambda_{ac}},\frac{E_{ba}E_{ab}}{\lambda_{ba}} \Bigg] = \\ = -\frac{E_{ac}E_{ba}E_{cb} - E_{bc}E_{ab}E_{ca}}{\lambda_{ab}\lambda_{bc}} + \frac{E_{ac}E_{ba}E_{cb} - E_{bc}E_{ab}E_{ca}}{\lambda_{ac}\lambda_{bc}} - \frac{E_{ac}E_{ba}E_{cb} - E_{bc}E_{ab}E_{ca}}{\lambda_{ac}\lambda_{ba}} = 0.
     \end{gathered}
     \ee
\end{enumerate}
The two cases where parity of $a,b$ and $c$ are equal are not considered because the three corresponding generators of the $\mathfrak{gl}(n\vert m)$ are bosonic. \par
From the above computation, one sees that the system $\eqref{sDyn}$ is compatible and the first part of the theorem is proven. \par
Now let us show that systems $\eqref{sKZ}$ and $\eqref{sDyn}$ are compatible. Again one needs to check that the following commutator vanishes
\be
\left[ \kappa\partial_{z_i} - \sum_{c}\lambda_c e^{(i)}_{cc} - \sum_{j,\neq i} \frac{P_{ij}}{z_{i} - z_{j}}, \kappa\partial_{a} - \sum_{j}z_{j}e^{(j)}_{aa} - \sum_{b,\neq a}(-1)^{p(b)}\frac{E_{ab}E_{ba} - E_{aa}}{\lambda_a - \lambda_b} \right] = 0.
\ee
Let us consider the most complicated parts of the above commutator
\begin{enumerate}
    \item \be
\begin{gathered}
\left[\sum_{c}\lambda_c e^{(i)}_{cc}, \sum_{b,\neq a}(-1)^{p(b)}\frac{E_{ab}E_{ba} - E_{aa}}{\lambda_{ab}} \right] + \left[\sum_{j,\neq i} \frac{P_{ij}}{z_{ij}}, \sum_{j}z_{j}e^{(j)}_{aa} \right] = \\ = \sum_{b,\neq a}(-1)^{p(b)}\left(e^{(i)}_{ab}E_{ba} - E_{ab}e^{(i)}_{ba} - e^{(i)}_{ab}E_{ba} + E_{ab}e^{(i)}_{ba}\right) = 0.
\end{gathered}
\ee
\item \be
\left[ \sum_{j,\neq i} \frac{P_{ij}}{z_{i} - z_{j}}, \sum_{b,\neq a}(-1)^{p(b)}\frac{E_{ab}E_{ba} - E_{aa}}{\lambda_a - \lambda_b} \right] = 0.
\ee
\end{enumerate}

Indeed, one has to check that 
\be
\left[ P_{ij},E_{ab}E_{ba}\right] = 0. \label{permcomm}
\ee
A simple calculation shows that the above equality is correct. 
\end{proof}
\section{ The Matsuo-Cherednik map}
In this section let us fix $k = n+m$ in the definition of $V$. 
\begin{theorem}
The following covectors, constructed in \cite{GZZ} provide the Matsuo-Cherednik map for Dynamical equations $\eqref{sDyn}$
\begin{subequations}
    \begin{gather}
    \langle \Omega^{0}\vert = \sum_{\sigma\in S_{n+m}}\langle e_{1}\otimes \ldots \otimes e_{n+m}\vert P_{\sigma}, \label{proj+} \\
    \langle \Omega^{1}\vert  = \sum_{\sigma\in S_{n+m}}\langle e_{1}\otimes \ldots \otimes e_{n+m}\vert (-1)^{sgn(\sigma)}P_{\sigma}, \label{proj-}
    \end{gather}
\end{subequations}
where $P_{\sigma} = P_{s_{i_1}}\ldots P_{s_{i_l}}$, where $P_{s_{i_j}}$ is a permutation, that corresponds to the reflection $s_{i_j}$ and $\sigma  = s_{i_1}s_{i_2}\ldots s_{i_{l}}$ is some decomposition into transpositions. Because $P_{s_{i_j}}$ satisfy braid relation the element $P_{\sigma}$ is correctly defined.
\end{theorem}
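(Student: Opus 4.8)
The plan is to show that contracting a joint solution $|\Psi\rangle$ of \eqref{sKZ}--\eqref{sDyn} with the covectors \eqref{proj+}--\eqref{proj-} turns the matrix--differential Dynamical operators into scalar differential operators in $\overrightarrow{\lambda}$, so that $\psi^{0}=\langle\Omega^{0}\vert\Psi\rangle$ and $\psi^{1}=\langle\Omega^{1}\vert\Psi\rangle$ solve a scalar bispectral problem of Calogero--Moser type. First I would record the symmetry of the covectors: since the $P_{s_{i}}$ represent $S_{n+m}$ by graded permutations, summing over the group and reindexing gives $\langle\Omega^{0}\vert P_{\tau}=\langle\Omega^{0}\vert$ and $\langle\Omega^{1}\vert P_{\tau}=(-1)^{\mathrm{sgn}(\tau)}\langle\Omega^{1}\vert$ for every $\tau\in S_{n+m}$, so that $\langle\Omega^{0}\vert$ and $\langle\Omega^{1}\vert$ are the graded symmetrizer and antisymmetrizer. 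I would then observe that both covectors lie in the dual of the weight subspace $V[1,\dots,1]$ in which every index occurs exactly once --- this is the reason for fixing $k=n+m$ --- and that the connection \eqref{sDyn} (like \eqref{sKZ}) is weight preserving, so the contraction is consistent.

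The core step is a reduction identity on $V[1,\dots,1]$. There $E_{aa}$ acts as the identity, and a direct check --- the only place where the parities $p(a),p(b)$ enter --- shows that, for $a\neq b$, the operator $E_{ab}E_{ba}-E_{aa}$ acts as the graded transposition $\Pi_{ab}$ interchanging the two tensor slots that carry the indices $a$ and $b$. Combining this with the (anti)symmetry above yields
\begin{equation*}
\langle\Omega^{\epsilon}\vert\,(E_{ab}E_{ba}-E_{aa})=c^{\epsilon}_{ab}\,\langle\Omega^{\epsilon}\vert ,\qquad \epsilon\in\{0,1\},
\end{equation*}
with $c^{\epsilon}_{ab}$ a sign fixed by $p(a),p(b)$ and by $\epsilon$. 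Hence the interaction term of \eqref{sDyn} collapses to the scalar potential $\sum_{b\neq a}(-1)^{p(b)}c^{\epsilon}_{ab}(\lambda_{a}-\lambda_{b})^{-1}$ times $\psi^{\epsilon}$; since the covectors do not depend on $\overrightarrow{\lambda}$, the derivative passes through, and contracting \eqref{sDyn} gives a first-order scalar equation for $\psi^{\epsilon}$ in which the only remaining operator contribution is the term $\sum_{j}z_{j}\langle\Omega^{\epsilon}\vert e^{(j)}_{aa}\vert\Psi\rangle$.

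It remains to interpret that last term, whose contraction is not proportional to $\psi^{\epsilon}$: it carries the dependence on $\overrightarrow{z}$ and plays the role of the spectral data in the quantum--classical correspondence of \cite{NRS,GZZ}, and I would use the construction of \cite{GZZ} to identify the resulting system with the rational Calogero--Moser eigenvalue problem in $\overrightarrow{\lambda}$, thereby establishing the map; its self-consistency is automatic, being inherited from the compatibility of \eqref{sKZ} and \eqref{sDyn} proven above. The hard part will be the graded sign bookkeeping: proving the reduction identity for $E_{ab}E_{ba}-E_{aa}$ with the correct signs in every parity sector, and verifying that the symmetric covector $\langle\Omega^{0}\vert$ and the antisymmetric one $\langle\Omega^{1}\vert$ each close consistently against the super structure. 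The subsidiary difficulty is handling the $\overrightarrow{z}$-dependent term so that the projected system is genuinely the bispectral Calogero--Moser problem rather than a mere relation among weight components.
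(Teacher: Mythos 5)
The statement itself is only given content in the paper by the Lemma (the eigenvalue property of the covectors) together with the final Theorem (the second--order Calogero--Moser eigenvalue equations in $\overrightarrow{\lambda}$), so I will compare your proposal against that combined argument. Your first half reproduces the paper's route faithfully: restriction to the weight subspace $V[1]$ where every index occurs once, the $S_{n+m}$ (anti)symmetry of $\langle\Omega^{0}\vert$ and $\langle\Omega^{1}\vert$, and the reduction identity stating that $E_{ab}E_{ba}-E_{aa}$ acts on $\vert e_{1}\otimes\ldots\otimes e_{n+m}\rangle$ as $(-1)^{p(b)}$ times the graded transposition, whence $\langle\Omega^{i}\vert(E_{ab}E_{ba}-E_{aa})=(-1)^{p(b)+i}\langle\Omega^{i}\vert$. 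This is exactly the paper's Lemma and its proof.

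The gap is in your second half. You contract the first--order equation \eqref{sDyn} with $\langle\Omega^{\epsilon}\vert$, observe that the term $\sum_{j}z_{j}\langle\Omega^{\epsilon}\vert e^{(j)}_{aa}\vert\Psi\rangle$ is not proportional to $\psi^{\epsilon}$, and then defer the rest to ``the construction of \cite{GZZ}.'' But that deferred step is the actual content of the theorem, and it is not an interpretation problem: the mechanism is to form the sum of squares $\sum_{a}\langle\Omega^{i}\vert D_{a}^{2}\vert\Psi\rangle$ of the dynamical operators. On $V[1]$ the square of the $z$--term contributes the eigenvalue $\sum_{j}z_{j}^{2}$ (this is where the right--hand side of the bispectral equation comes from), the square of the interaction term yields the potential $\sum_{a\neq b}\bigl(\kappa(-1)^{i}-1\bigr)/(\lambda_{a}-\lambda_{b})^{2}$ after applying the eigenvalue property and the identity $\sum_{b\neq c\neq a}\frac{1}{(\lambda_{a}-\lambda_{b})(\lambda_{a}-\lambda_{c})}=0$, and what remains is the anticommutator cross--term
\begin{equation*}
\sum_{a}\langle\Omega^{i}\vert\Bigl\{\sum_{j}z_{j}e^{(j)}_{aa},\ \sum_{b\neq a}(-1)^{p(b)}\tfrac{E_{ab}E_{ba}-E_{aa}}{\lambda_{a}-\lambda_{b}}\Bigr\}\vert\Psi\rangle ,
\end{equation*}
whose vanishing is the technical core of the proof. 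The paper establishes it by pairing the $(a,b)$ and $(b,a)$ contributions and carrying out an explicit index computation with the covectors written as sums over orderings, using the sign relation between $f_{0}$ evaluated at swapped arguments. Your proposal identifies the sign bookkeeping as ``the hard part'' but offers no mechanism for this cancellation, and without it the projected system is not the closed bispectral Calogero--Moser problem you claim. So the proposal is right in outline up to the Lemma, but the decisive step --- squaring, summing over $a$, and killing the cross--term --- is missing rather than merely sketched.
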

The appearance of $\eqref{proj-}$ is a property of $V[1]$ on which we are projecting the Dynamical equation. The  $V[1]\subset V$ is a subspace, such that $E_{aa}\vert_{V[1]} = 1$ for all $a \leq n+m$. Such condition is consistent with equations $\eqref{sKZ}$ and $\eqref{sDyn}$ because the Cartan generators commute with them. And presumably one can not consider the other weight subspaces different to $V[1]$, because the symmetry between $z's$ and $\lambda's$ is broken. Now let us prove a simple and useful 
\begin{lemma}
The covectors $\eqref{proj+}$ and $\eqref{proj-}$ are eigenvectors of the following operators
\begin{equation}
        \langle \Omega^{i}\vert (E_{ab}E_{ba}-E_{aa}) = (-1)^{p(b) + i}\langle \Omega^{i}\vert, \label{eigenp}
\end{equation}
where $i = 0,1$.
\end{lemma}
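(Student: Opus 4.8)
The plan is to prove the eigenvalue relation $\eqref{eigenp}$ by a direct action on the covectors $\langle\Omega^{i}\vert$, exploiting the fact that on $V[1]$ the tensor factors each carry exactly one basis vector $e_{c}$, so the symmetrized covectors are (signed) sums over permutations of the ``multiplicity-one'' weight vector. Concretely, I would first record that on $V[1]$ the diagonal operator acts as $\langle\Omega^{i}\vert E_{aa} = \langle\Omega^{i}\vert$, since $E_{aa}$ counts the number of tensor slots in state $a$ and this is $1$ on $V[1]$; this disposes of the $-E_{aa}$ term and reduces the claim to showing $\langle\Omega^{i}\vert E_{ab}E_{ba} = \bigl((-1)^{p(b)+i}+1\bigr)\langle\Omega^{i}\vert$.

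Next I would evaluate $\langle\Omega^{i}\vert E_{ab}E_{ba}$ directly. Writing $E_{ab}E_{ba}=\sum_{j,l}e^{(j)}_{ab}e^{(l)}_{ba}$, I would split the sum into the diagonal part $j=l$ and the off-diagonal part $j\neq l$. The diagonal part $\sum_{j}e^{(j)}_{ab}e^{(j)}_{ba}$ acts within a single tensor slot; using the matrix-unit relation $e_{ab}e_{ba}=e_{aa}$ (in that one slot) it contributes a copy of $E_{aa}$, i.e. $\langle\Omega^{i}\vert$ on $V[1]$. The off-diagonal part is the interesting piece: $e^{(j)}_{ab}e^{(l)}_{ba}$ with $j\neq l$ moves the label $b$ out of slot $l$ and the label $a$ out of slot $j$, effectively transposing the contents of slots $j$ and $l$ on the relevant weight component. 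I would then show that summing this against the symmetrizer $\langle\Omega^{0}\vert$ (resp. the antisymmetrizer $\langle\Omega^{1}\vert$) reproduces $\langle\Omega^{i}\vert$ up to the sign that records the graded transposition, namely $(-1)^{p(b)+i}$.

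The key technical input is the interplay between the graded permutation $P_{\sigma}$ built into the covectors and the signs $(-1)^{p(b)}$ carried by the super matrix units; I would track these using the defining relations of $\mathfrak{gl}(n\vert m)$ and the structure of $P_{ij}$ recalled after $\eqref{sKZ}$. The cleanest route is to observe that on the weight-one subspace the operator $E_{ab}E_{ba}$ (off-diagonal part) acts as the graded swap of the two slots carrying $a$ and $b$, so that against $\langle\Omega^{i}\vert$ it pulls out precisely the eigenvalue of that graded transposition, which is $+1$ for $\langle\Omega^{0}\vert$ and $(-1)$ for $\langle\Omega^{1}\vert$, corrected by the fermionic sign $(-1)^{p(b)}$ when slot $b$ is odd. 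Combining the diagonal contribution $+1$ with this off-diagonal eigenvalue and subtracting the $E_{aa}=1$ term yields $(-1)^{p(b)+i}$, as claimed.

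The main obstacle I anticipate is bookkeeping the signs correctly: the super matrix units do not commute past each other freely, and the graded permutation $P_{\sigma}$ introduces its own parity factors, so the off-diagonal term must be handled with care to ensure the fermionic sign $(-1)^{p(b)}$ emerges with the right placement relative to $(-1)^{i}$. I would verify the sign by testing the action on a single representative basis covector $\langle e_{1}\otimes\cdots\otimes e_{n+m}\vert$ before summing over $S_{n+m}$, checking separately the cases $p(b)=0$ and $p(b)=1$; once the sign is pinned down on a single term, the full symmetrized (resp. antisymmetrized) statement follows by the defining invariance of $\langle\Omega^{i}\vert$ under the graded $P_{\sigma}$.
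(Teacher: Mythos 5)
Your proposal is correct and follows essentially the same route as the paper: the paper likewise reduces the claim to the single-basis-vector identity $(E_{ab}E_{ba}-E_{aa})\vert e_1\otimes\cdots\otimes e_{n+m}\rangle=(-1)^{p(b)}P_{ab}\vert e_1\otimes\cdots\otimes e_{n+m}\rangle$ (your diagonal term cancelling $E_{aa}$ and the off-diagonal term acting as the signed graded swap) and then (skew-)symmetrizes to pick up the extra $(-1)^{i}$. The only cosmetic difference is that the paper works with the kets $\vert\Omega^{i}\rangle$ rather than the covectors.
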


\begin{proof}
Instead of considering the above covectors one can consider the vectors $\vert \Omega^{i}\rangle$ and prove for them the analogue properties $\eqref{eigenp}$. A simple computation shows that
\be
\left(E_{ab}E_{ba} - E_{aa}\right) \vert e_{1}\otimes \ldots\otimes e_{n+m}\rangle = (-1)^{p(b)}P_{ab}\vert e_{1}\otimes \ldots\otimes e_{n+m} \label{simple} \rangle.
\ee
Then, symmetrize or skew symmetrize the r.h.s. of $\eqref{simple}$ one obtains $\eqref{eigenp}$ immediately.
\end{proof}
Now we are ready to prove the following 
\begin{theorem}
Let $\vert\Psi\rangle$ be the solution of joint system $\eqref{sKZ}$, $\eqref{sDyn}$ with values in $V[1]$, then
\begin{subequations}
\begin{gather}
\left( \kappa^2\sum^{m+n}_{a = 1} +\sum_{a\neq b}\frac{(-1)^i\kappa -1}{(z_{ab})^2}\right)\langle\Omega^{i}\vert\Psi\rangle = \left(\sum^{n+m}_{i = 1}\lambda^{2}_{i}\right)\langle\Omega^{i}\vert\Psi\rangle, \label{GrZabZot} \\
   \left( \kappa^2\sum^{m+n}_{a = 1} +\sum_{a\neq b}\frac{(-1)^i\kappa -1}{(\lambda_{ab})^2}\right)\langle\Omega^{i}\vert\Psi\rangle = \left(\sum^{n+m}_{i = 1}z^{2}_{i}\right)\langle\Omega^{i}\vert\Psi\rangle,
   \end{gather}
\end{subequations}
where $i = 0,1$
\end{theorem}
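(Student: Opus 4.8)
The plan is to turn the joint flatness of $\vert\Psi\rangle$ into a scalar second-order equation by projecting with the \emph{constant} covectors $\langle\Omega^i\vert$. Since these covectors do not depend on $\overrightarrow z$ or $\overrightarrow\lambda$, the Calogero operator (I read the first summand on the left of \eqref{GrZabZot} as $\kappa^2\sum_a\partial_{z_a}^2$) passes freely through the pairing, so it suffices to evaluate $\kappa^2\sum_a\partial_{z_a}^2\vert\Psi\rangle$. Differentiating the connection \eqref{sKZ} once more and using \eqref{sKZ} again to remove the first derivative, I obtain
\be
\kappa^2\sum_a\partial_{z_a}^2\vert\Psi\rangle = \sum_a H_a^2\,\vert\Psi\rangle - \kappa\sum_{a\neq b}\frac{P_{ab}}{(z_{ab})^2}\vert\Psi\rangle, \qquad H_a = \sum_c\lambda_c e^{(a)}_{cc} + \sum_{b\neq a}\frac{P_{ab}}{z_a-z_b},
\ee
the second term arising from $\partial_{z_a}(z_a-z_b)^{-1}$. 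Everything is thus reduced to computing $\sum_a H_a^2$ on $V[1]$.

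Expanding $\sum_a H_a^2$ I distinguish three pieces. The purely diagonal part $\sum_a(\sum_c\lambda_c e^{(a)}_{cc})^2 = \sum_c\lambda_c^2\,E_{cc}$ equals $\sum_c\lambda_c^2$ on $V[1]$ because $E_{cc}\vert_{V[1]}=1$, and this supplies the eigenvalue. The cross terms $\sum_a\{\,\sum_c\lambda_c e^{(a)}_{cc},\ \sum_{b\neq a}P_{ab}/(z_a-z_b)\,\}$ cancel after pairing $(a,b)$ with $(b,a)$, since $P_{ab}$ conjugates the site-$a$ Cartan part into the site-$b$ one. The Gaudin square $\sum_a(\sum_{b\neq a}P_{ab}/(z_a-z_b))^2$ reduces, via $P_{ab}^2=1$ on its diagonal part and the cyclic identity $\sum_{\mathrm{cyc}}[(z_a-z_b)(z_a-z_c)]^{-1}=0$ together with the apex-independence of $P_{ab}P_{ac}+P_{ac}P_{ab}$ on the three-index part, to the scalar $\sum_{a\neq b}(z_{ab})^{-2}$. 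Hence $\sum_a H_a^2\vert_{V[1]} = \sum_c\lambda_c^2 + \sum_{a\neq b}(z_{ab})^{-2}$. Pairing with $\langle\Omega^i\vert$ and invoking \eqref{eigenp} in the form $\langle\Omega^i\vert P_{ab}=(-1)^i\langle\Omega^i\vert$ (which follows from \eqref{simple}) converts the remaining permutation into the scalar $(-1)^i$; collecting the two $(z_{ab})^{-2}$ contributions produces exactly the coefficient $(-1)^i\kappa-1$ and yields \eqref{GrZabZot}.

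The second equation follows by running the identical argument on the dynamical connection \eqref{sDyn}, with $z\leftrightarrow\lambda$ and the graded transposition $P_{ab}$ replaced by $\mathcal E_{ab}:=(-1)^{p(b)}(E_{ab}E_{ba}-E_{aa})$. By \eqref{simple} this operator acts on $V[1]$ as the dual colour permutation, and the Lemma gives $\langle\Omega^i\vert\mathcal E_{ab}=(-1)^i\langle\Omega^i\vert$, the exact analogue of the relation used above; the diagonal term $\sum_a(\sum_j z_j e^{(j)}_{aa})^2$ collapses to $\sum_j z_j^2$ on $V[1]$, furnishing the new eigenvalue.

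The main obstacle is that, unlike the honest graded permutations $P_{ab}$, the operators $\mathcal E_{ab}$ are quadratic in the $\mathfrak{gl}(n\vert m)$ generators, so the three structural facts that drove the $z$-computation — namely $\mathcal E_{ab}^2=1$, the apex-independence of $\mathcal E_{ab}\mathcal E_{ac}+\mathcal E_{ac}\mathcal E_{ab}$ required for the cyclic cancellation, and the conjugation rule $\mathcal E_{ab}(\sum_j z_j e^{(j)}_{aa})=(\sum_j z_j e^{(j)}_{bb})\mathcal E_{ab}$ for the cross terms — must be re-established on $V[1]$ from the defining relations \eqref{definingglnm} and \eqref{superotimes}, with careful bookkeeping of the parity signs $(-1)^{p(b)}$. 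This is precisely the kind of parity case analysis already carried out in parts (1)--(3) of the compatibility proof above, so I expect the same split over the parities of $a,b,c$ to close it; once these relations are in hand the $\lambda$-computation is word-for-word the $z$-one and delivers the second line of the theorem.
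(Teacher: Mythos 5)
Your overall skeleton -- square the connection operators, remove first derivatives using flatness, and reduce the result to scalars on $V[1]$ -- is the same as the paper's, and your reading of the missing $\partial^2$ in the displayed operator is the intended one. But the two arguments diverge in how the operator products are collapsed, and the difference matters. The paper does not re-derive \eqref{GrZabZot} at all (it is cited from \cite{GZZ}); for the dynamical equation it pairs with $\langle\Omega^i\vert$ \emph{first} and uses the left-eigenvector property \eqref{eigenp}, so that every factor $(E_{ab}E_{ba}-E_{aa})$ immediately becomes the scalar $(-1)^{p(b)+i}$. This makes two of your three ``structural facts'' unnecessary: $\langle\Omega^i\vert(E_{ab}E_{ba}-E_{aa})^2=\langle\Omega^i\vert$ and the three-index terms reduce to $\sum_{b\neq c\neq a}\lambda_{ab}^{-1}\lambda_{ac}^{-1}=0$ with no need for apex-independence of any anticommutator of operators. (Be aware that identities like $\mathcal{E}_{ab}^{2}=1$ you want to prove hold only on $V[1]$, not on $V$, so if you insist on the operator-level route you must stay inside that weight space throughout.) The one place where real work remains in both arguments is the cross term $\{\sum_j z_j e^{(j)}_{aa},\,\sum_{b\neq a}(-1)^{p(b)}(E_{ab}E_{ba}-E_{aa})/\lambda_{ab}\}$, which cannot be killed by \eqref{eigenp} alone because $\sum_j z_j e^{(j)}_{aa}$ does not preserve the line spanned by $\langle\Omega^i\vert$. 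Here your proposal is genuinely different: you invoke the conjugation rule $\mathcal{E}_{ab}\,(\sum_j z_j e^{(j)}_{aa})=(\sum_j z_j e^{(j)}_{bb})\,\mathcal{E}_{ab}$ on $V[1]$, which together with the antisymmetry of $1/\lambda_{ab}$ cancels the $(a,b)$ and $(b,a)$ contributions pairwise; the paper instead expands $\langle\Omega^i\vert$ in the colour basis and tracks the graded sign functions $f_i$ explicitly. Your route is cleaner if the conjugation rule is actually proved -- it does hold, but it requires extending \eqref{simple} from the reference vector $e_1\otimes\cdots\otimes e_{n+m}$ to the statement that $\mathcal{E}_{ab}$ acts on all of $V[1]$ as the (graded) swap of the sites carrying colours $a$ and $b$, with the parity bookkeeping done once and for all. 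As written you have flagged this as an expectation rather than a proof, so that lemma is the gap you still owe; once it is supplied, your argument is complete and, for the cross term, somewhat more transparent than the paper's.
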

\begin{proof}
The $\eqref{GrZabZot}$ was proven in \cite{GZZ}, so we are to prove the second. Let $D_a$ be the $a^{th}$ dynamical operator. And let us consider the following sum
\begin{equation}
\begin{gathered}
     \langle \Omega ^{i}\vert \sum^n_{a = 1}D^2_a\vert\Psi\rangle = \langle \Omega \vert \Bigg(\sum^{n}_{a = 1}\kappa\frac{\partial^2}{\partial\lambda^2_a} - \{\sum^{n}_{i,a = 1}z_ie^{(i)}_{aa},\sum_{b\neq a}(-1)^{p(b)}\frac{E_{ab}E_{ba} - E_{aa}}{\lambda_a - \lambda_b}\}- \\ - \left( \sum_{i} z_ie^{(i)}_{aa}\right)^2 - \sum_{b\neq c\neq a}\frac{(-1)^{p(b)+p(c)}(E_{ab}E_{ba} - E_{aa})(E_{ac}E_{ca} - E_{aa})}{(\lambda_a - \lambda_b)(\lambda_a - \lambda_c)} + \\ + \sum_{b\neq a}\frac{\kappa(-1)^{p(b)}(E_{ab}E_{ba} - E_{aa}) +(E_{ab}E_{ba} - E_{aa})^2}{(\lambda_a - \lambda_b)^2}\Bigg)\vert\Psi\rangle, \label{sumofsquares}
\end{gathered}
\end{equation}
in passing from l.h.s. to r.h.s. of $\eqref{sumofsquares}$ the $\eqref{sDyn}$ were used. After applying $\eqref{eigenp}$ and the identity $\sum_{b\neq c\neq a}\frac{1}{(\lambda_a - \lambda_b)(\lambda_a - \lambda_c)} = 0$ the above expression simplifies to
\begin{equation}
\langle \Omega ^{i}\vert (\sum^{n+m}_{a = 1}\kappa^2\frac{\partial^2}{\partial\lambda^2_a} - \{\sum^{n+m}_{i,a = 1}z_ie^{(i)}_{aa},\sum_{b\neq a}(-1)^{p(b)}\frac{E_{ab}E_{ba} - E_{aa}}{\lambda_a - \lambda_b}\}-  \sum_{i} z^2_i + \sum_{b\neq a}\frac{\kappa(-1)^{i} - 1}{(\lambda_a - \lambda_b)^2})\vert\Psi\rangle = 0.
\end{equation}
To prove the theorem one has to show, that
\be
    \langle \Omega^{i} \vert\left( \{\sum^{n+m}_{i,a = 1}z_ie^{(i)}_{aa},\sum_{b\neq a}(-1)^{p(b)}\frac{E_{ab}E_{ba} - E_{aa}}{\lambda_a - \lambda_b}\}\right)\vert\Psi\rangle = 0.
\ee
In fact, one has to prove, that for any distinct $a$ and $b$
\begin{equation}
    \langle \Omega^{i} \vert\left( \{\sum^{n+m}_{i = 1}z_ie^{(i)}_{aa},(-1)^{p(b)}(E_{ab}E_{ba} - E_{aa})\}-\{\sum^{n+m}_{i = 1}z_ie^{(i)}_{bb},(-1)^{p(a)}(E_{ba}E_{ab} - E_{bb})\}\right ) = 0.
\end{equation}
After opening the brackets one obtains the following
\begin{equation}
    \begin{gathered}
         \langle \Omega^{i}\vert\left((\sum^{n}_{i = 1}z_ie^{(i)}_{aa})(-1)^{p(b)}(E_{ab}E_{ba} - E_{aa})\right)+ (-1)^{i}\langle \Omega^{i} \vert(\sum^{n}_{i = 1}z_ie^{(i)}_{aa}) - \\ - \langle \Omega^{i} \vert\left((\sum^{n}_{i = 1}z_ie^{(i)}_{bb})(-1)^{p(a)}(E_{ba}E_{ab} - E_{bb})\right)- (-1)^i\langle \Omega^{i} \vert(\sum^{n}_{i = 1}z_ie^{(i)}_{bb}). \label{open}
          \end{gathered}
\end{equation}
One can write the projectors $\eqref{proj+},$ $\eqref{proj-}$ in the following form
\begin{equation}
\begin{gathered}
\langle \Omega^{i}\vert = \sum_{a_1 \neq \ldots \neq a_{n+m}} \langle e_{a_1}\otimes\ldots\otimes e^{(i)}_{a}\otimes\ldots\otimes e^{(j)}_{b}\otimes\ldots\otimes e_{a_{n+m})}\vert f_i(a_1,\ldots,a_{n+m}).
\end{gathered}
\end{equation}
where the functions $f_i$ are generalizations of the sign function of the ordinary permutation to the graded case. \par
Let us set $i  = 0$ and consider $p(a) = 0$, $p(b) = 1$ (the most non-trivial case) explicitly
\begin{equation}
\begin{gathered}
    \sum^{n}_{i = 1}\left(\sum_{j,\neq i}\sum_{\{a_i = a, a_j = b\}} \langle e_{a_1}\otimes\ldots\otimes e^{(i)}_{a}\otimes\ldots\otimes e^{(j)}_{b}\otimes\ldots\otimes e_{a_{n+m})}\vert f_0(a_1,\ldots,a_{n+m})\right)\times \\ \times z_i\left((-1)^{p(b)}(E_{ab}E_{ba} - E_{aa}) +1\right)  =  \\ = \sum^{n+m}_{i = 1}\Bigg(\sum_{j,\neq i}\sum_{\{a_i = a, a_j = b\}} \Big[ \langle e_{a_1}\otimes\ldots\otimes e^{(i)}_{b}\otimes\ldots\otimes e^{(j)}_{a}\otimes\ldots\otimes e_{a_{m+N}}\vert(-1)^{\sum_{k\in\{i,j\}}p(a_k)} + \\ + \langle e_{\sigma(1)}\otimes\ldots\otimes e^{(i)}_{a}\otimes\ldots\otimes e^{(j)}_{b}\otimes\ldots\otimes e_{a_{n+m}}\vert \Big]f_0(a_1,\ldots,a,\ldots,b,\ldots,a_{n+m})\Bigg) z_i,
    \end{gathered}
\end{equation}
where $\sum_{\{a_i = a, a_j = b\}} =\sum_{a_1\neq \ldots\neq a\neq \ldots\neq b\neq\ldots \neq a_{m+n}}$ and $\sum_{k\in\{i,j\}}$ means, that summation runs over all $k$, that lies in the interval strictly between $i$ and $j$. So the second line gives the following 
\begin{equation}
    \begin{gathered}
    \sum^{n+m}_{i = 1}\Bigg(\sum_{j,\neq i}\sum_{\{a_i = b, a_j = a\}} \Big[ \langle e_{a_1}\otimes\ldots\otimes e^{(i)}_{a}\otimes\ldots\otimes e^{(j)}_{b}\otimes\ldots\otimes e_{a_{m+N}}\vert(-1)^{\sum_{k\in\{i,j\}}p(a_k)} + \\ + \langle e_{\sigma(1)}\otimes\ldots\otimes e^{(i)}_{b}\otimes\ldots\otimes e^{(l)}_{a}\otimes\ldots\otimes e_{a_{n+m}}\vert \Big]f_0(a_1,\ldots,b,\ldots,a,\ldots,a_{n+m})\Bigg) z_i.
    \end{gathered}
\end{equation}
Because of $\eqref{eigenp}$ one has the following relation $$f_0(a_1,\ldots,a,\ldots,b,\ldots,a_{n+m}) = (-1)^{\sum_{k\in\{i,j\}}p(a_k)}f_0(a_1,\ldots,b,\ldots,a,\ldots,a_{n+m}).$$ From which one sees that difference $\eqref{open}$ vanishes. The other cases of parity of $a$ and $b$ as well as other value of $i$ could be considered in a similar manner.
\end{proof}
\section{Summary}
To conclude we have proved the compatibility of proposed system of Dynamical equations and its commutativity with twisted KZ equations. The structure of the formula $\eqref{sDyn}$ is very much like in \cite{FMTV}, but not quite because of fermionic roots. It will be interesting to generalize the work of \cite{EV} to the Lie superalgebra case as well as compute the monodromy of the proposed Dynamical equations. It will be addressed elsewhere \cite{E}.
\section{Appendix}  
Here we give a brief summary of some definitions of the Lie superalgebra $\mathfrak{gl}(n\vert m)$\par
Let $\mathcal{J} = \{1,\ldots n+m\}$ and let $p:\mathcal{J} \to \{0,1\}$ 
\begin{equation}
\begin{cases}
    p(a) = 0,  \, a  \leq n, \text{(bosons)} \, , \\
    p(a) = 1, \, a> n \text{(fermions)} .
     \end{cases}
 \end{equation}
 The $\mathfrak{gl}(n\vert m)$ algebra is generated by $e_{ab}$ where $a,b\in \mathcal{J}$ with the following relations 
 \begin{equation}
     e_{ab}e_{cd} - (-1)^{p(e_{ab})p(e_{cd})}e_{cd}e_{ab} = \delta_{bc} e_{ad} - (-1)^{p(e_{ab})p(e_{cd})}\delta_{da}e_{cb}, \label{definingglnm}
     \end{equation}
    where
    \begin{equation}
        p(e_{ab}) = p(a) + p(b)\, \text{mod 2}.
    \end{equation}
    The $\otimes$ product of the superalgebra representations is defined in such a way, that for operators, which has definite parity, and acting non-trivially only in the $i^{th}$ and $j^{th}$ component of the tensor product the following holds
    \begin{equation}
        A^{(i)}B^{(j)} =(-1)^{p(A)p(B)} B^{(j)}A^{(i)}. \label{superotimes}
    \end{equation}
    In the $\mathbb{C}^{n\vert m}$ there is a basis $e_{a}$ such that $e_{ab}(e_c) = \delta_{bc}e_{a}$, meaning that $e_{ab}$ are just matrix units. Let $x,y\in \mathbb{C}^{n\vert m}$ with definite $p(x)$ and $p(y)$ then the graded permutation acts as follows
    \begin{equation}
        P_{12}\left(x\otimes y \right) = (-1)^{p(x)p(y)}y\otimes x. 
    \end{equation}

\end{document}